\newtheorem{example}{Example}
\newtheorem{lemma}{Lemma}
\title{Expressiveness of Visibly Pushdown Transducers\thanks{This work has been supported by the PEPS project SOSP (``Synthesis of Stream Processors'') funded by CNRS and by the project ECSPER (ANR-09-JCJC-0069) funded by the ANR.}}
\author{Mathieu Caralp$\quad$Pierre-Alain Reynier$\quad$ Jean-Marc Talbot
\institute{Laboratoire d'Informatique Fondamentale de Marseille \\
Aix-Marseille Universit\'e \& CNRS, France}
\email{\{mathieu.caralp,pierre-alain.reynier,jean-marc.talbot\}@lif.univ-mrs.fr}
\and
Emmanuel Filiot\thanks{FNRS Research Associate (``Chercheur Qualifi\'e'')}
\institute{CS Department \\ Universit\'e Libre de Bruxelles, Belgium}
\email{efiliot@ulb.ac.be}
\and
Fr\'ed\'eric Servais
\institute{Hasselt University and transnational University of Limburg\\ Belgium}
\email{frederic.servais@gmail.com}
}
\newcommand{\ptime}{{\sc PTime}\xspace}
\newcommand{\exptimecomplete}{{\sc ExpTime\text{-}c}\xspace}
\newcommand{\nptime}{{\sc NPTime}\xspace}
\newcommand{\mytimes}{\mathord\times}
\newcommand{\pts}{\ensuremath{\mathsf{PTs}}\xspace}
\newcommand{\vpa}{\ensuremath{\mathsf{VPA}}\xspace}
\newcommand{\vpt}{\ensuremath{\mathsf{VPT}}\xspace}
\newcommand{\vpts}{\ensuremath{\mathsf{VPTs}}\xspace}
\newcommand{\wnvpt}{\ensuremath{\mathsf{wnVPT}}\xspace}
\newcommand{\wnvpts}{\ensuremath{\mathsf{wnVPTs}}\xspace}
\newcommand{\htos}{\ensuremath{\mathsf{H2S}}\xspace}
\newcommand{\htostr}{\ensuremath{\mathsf{H2S_{tr}}}\xspace}
\newcommand{\htoh}{\ensuremath{\mathsf{H2H}}\xspace}
\newcommand{\htohtr}{\ensuremath{\mathsf{H2H_{tr}}}\xspace}
\newcommand{\htob}{\ensuremath{\mathsf{H2B}}\xspace}
\newcommand{\hedgeset}{\ensuremath{\mathcal{H}}}
\newcommand{\wnset}{\ensuremath{\mathcal{W}}}
\newcommand{\bwnset}{\ensuremath{\mathcal{BW}}}
\newcommand{\btset}{\ensuremath{\mathcal{B}}}
\newcommand{\fcns}{\ensuremath{\mathsf{fcns}}\xspace}
\newcommand{\lin}{\ensuremath{\mathsf{lin}}\xspace}
\newcommand{\hedge}{\ensuremath{\mathsf{hedge}}\xspace}
\newcommand{\interp}[1]{\llbracket #1 \rrbracket}
\begin{document}
\maketitle

\begin{abstract}
Visibly pushdown transducers (VPTs) are visibly pushdown
automata extended with outputs. They have been introduced
to model transformations of nested words, i.e. words with a
call/return structure. As trees and more generally hedges can be 
linearized into (well) nested words, VPTs are a natural formalism to express
tree transformations evaluated in streaming. This paper aims at characterizing 
precisely the expressive power of VPTs with respect to other tree transducer models.
\end{abstract}

\section{Introduction}
\label{sec:introduction}

Visibly pushdown machines~\cite{RM04}, automata (\vpa) or transducers, 
are pushdown machines such that stack behavior is synchronized with
the structure of the input word. Precisely, the input alphabet is
partitioned into call and return symbols.
When reading a call symbol the machine must push a symbol onto the
stack, and when reading a return symbol it must pop a symbol from the stack. %

Visibly pushdown transducers (\vpts)~\cite{RS08,Servais11,MFCS2010,SLLN09} extend 
visibly pushdown automata \cite{RM04} with outputs. Each transition is
equipped with an output word that is appended to the output tape
whenever the transition is triggered. A \vpt thus transforms an input
word into an output word obtained as the concatenation of all the
output words produced along a
successful run on that input. \vpts are a strict subclass of pushdown
transducers (\pts) and strictly extend finite state transducers.
Several problems that are undecidable for \pts are decidable for
\vpts, most notably: functionality (in \ptime), $k$-valuedness (in \nptime) and
functional equivalence (\exptimecomplete)~\cite{MFCS2010}. \vpts are
closed by regular look-ahead which makes them a robust class of
transformations \cite{conf/sofsem/FiliotS12}.

Unranked trees and more generally hedges can be linearized into
well-nested words over a structured alphabet (such as XML documents).
\vpt are therefore a suitable formalism to express hedge
transformations. In particular, they can express operations such as node deletion, 
renaming and insertion. As they process the linearization from left to right, 
they are also an adequate formalism to model and analyze
transformations in streaming, as shown in \cite{filiot_et_al:LIPIcs:2011:3352}. \vpts output strings, therefore on
well-nested inputs they define hedge-to-string transformations, and if the output strings are
well-nested too, they define hedge-to-hedge transformations. 

In this paper, we characterize the expressive power of \vpts
w.r.t. their ability to express hedge-to-string (\htos), and
hedge-to-hedge (\htoh)
transformations. To do so, we define a top-down model of
hedge-to-string transducers, inspired by classical top-down tree
transducers. They correspond to parameter-free linear order-preserving 
macro forest transducers that output strings \cite{PerSei04}. We define a syntactic 
restriction of \htos that captures exactly \vpts, and show that if the
\vpts runs on binary encodings of hedges, then they have exactly the
same expressive power as \htos. We show that those results still hold
when both models are restricted to hedge-to-hedge transformations.
Based on those results, we compare \vpts with classical ranked tree
transducers, such as top-down tree transducers \cite{tata} and macro tree
transducers \cite{Engelfriet03}.

\section{Transducer Models for Nested Words and Hedges}
\label{sec:definitions}


\noindent \textbf{Words and Nested Words}
The set of finite words over a (finite) alphabet $\Sigma$ is denoted by $\Sigma^*$, and the empty
word is denoted by $\epsilon$. A \emph{structured alphabet} is a pair $\Sigma = (\Sigma_c,\Sigma_r)$ of
disjoint alphabets, of call and return symbols respectively. Given a structured alphabet 
$\Sigma$, we always denote by $\Sigma_c$ and $\Sigma_r$ its implicit structure, and
identify $\Sigma$ with $\Sigma_c\cup \Sigma_r$.

A \emph{nested word} is a finite word over a structured alphabet. 
The set of \emph{well-nested words} over a structured alphabet $\Sigma$ is the least set,
denoted by $\wnset_{\Sigma}$, that satisfies $(i)$ $\epsilon\in \wnset_\Sigma$, $(ii)$ 
for all $w,w'\in \wnset_\Sigma$, $ww'\in \wnset_\Sigma$ (closure under concatenation), and
$(iii)$ for all $w\in\wnset_\Sigma$, $c\in \Sigma_c$, $r\in \Sigma_r$, $cwr\in\wnset_\Sigma$. E.g. on $\Sigma = (\{c_1, c_2\}, \{ r\})$, the nested word $c_1rc_2r$ is well-nested while
$rc_1$ is not. Finally, note that any well-nested word $w$ is either empty or can be 
decomposed uniquely as $w = cw_1rw_2$ where $c\in\Sigma_c, r\in\Sigma_r$, $w_1,w_2\in \wnset_\Sigma$. 

\vspace{2mm}
\noindent \textbf{Hedges}
Let $\Lambda$ be an alphabet. We let $S(\Lambda)$ be the signature
$\{0, \cdot\}\cup \{a\ |\ a\in \Lambda\}$ where $0$ is a constant
symbol, $a\in\Lambda$ are unary symbols and $\cdot$ is a binary
symbol. The set of \emph{hedges} $\hedgeset_\Lambda$ over $\Lambda$ is
the quotient of the free $S(\Lambda)$-algebra by the associativity of
$\cdot$ and the axioms $0\cdot h = h\cdot 0 = h$. The constant $0$ is
called the empty hedge. We may write $a$ instead of $a(0)$, and omit
$\cdot$ when it is clear from the context. \emph{Unranked trees} are
particular hedges of the form $a(h)$ where $h\in
\hedgeset_\Lambda$. 
Note that any hedge $h$ is either empty or can be decomposed as $h = a(h_1)\cdot h_2$.

Hedges over $\Lambda$ can be naturally encoded as well-nested words over the
structured alphabet $\Lambda_s = (\Lambda_c, \Lambda_r)$ where $\Lambda_c$ and
$\Lambda_r$ are new alphabets respectively defined by
$\Lambda_c = \{ c_a\ |\ a\in \Lambda\}$ and $\Lambda_r = \{ r_a\ |\ a\in \Lambda\}$. 
This correspondence is given via a morphism $\lin : \hedgeset_\Lambda\rightarrow \wnset_{\Lambda_s}$ 
inductively defined by: $\lin(0) = \epsilon$ and $\lin(a(h_1).h_2) = c_a\lin(h_1)r_a\lin(h_2)$. E.g. for $\Lambda = \{ a, b\}$, we have $\lin(ab(ab)) = c_ar_ac_bc_ar_ac_br_br_b$.

Conversely, any well-nested word over a structured alphabet $\Sigma$ can be encoded as
an hedge over the product alphabet $\Sigma_c\times \Sigma_r$, via 
the mapping $\hedge: \wnset_{\Sigma}\rightarrow \hedgeset_{\Sigma_c\times \Sigma_r}$ defined
as $\hedge(\epsilon) = 0$ and 
$\hedge(c w_1 r w_2) = (c,r)(\hedge(w_1))\cdot \hedge(w_2)$ for all $(c,r)\in \Sigma_c\times \Sigma_r$ and
all $w_1,w_2\in \wnset_{\Sigma}$.



\vspace{2mm}
\noindent\textbf{Binary Trees}
We consider here an alphabet $\Lambda$ augmented with some special
symbol $\bot$. We define the set of \emph{binary trees}
$\btset_\Lambda$ as a particular case of unranked trees over $\Lambda
\cup \{\bot\}$. Binary trees are defined recursively as:
$(i)$ $\bot \in \btset_\Lambda$, and $(ii)$ for all $f \in \Lambda$, if
$t_1,t_2 \in \btset_\Lambda$ then $f(t_1\,t_2) \in \btset_\Lambda$.  

There is a well-known correspondence between hedges and binary trees by
means of an encoding called the first-child next-sibling
encoding. This encoding is given by the mapping \fcns defined as: 
$(i)$ $\fcns(0) = \bot$, $(ii)$ $\fcns(f(h_1)h_2) = f(\fcns(h_1) \,
\fcns(h_2))$ for all $h_1,h_2$ in $\hedgeset_\Lambda$. 

The strong relationship between hedges and well-nested words can be
considered when restricted to binary trees: we define
$\bwnset_{\Lambda_s}$ the set of binary well-nested words over
the structured alphabet $(\Lambda_c\cup \{\bot_c\},\Lambda_r\cup
\{\bot_r\})$ as the least set satisfying: $(i)$ $\bot_c\bot_r \in
\bwnset_{\Lambda_s}$ and $(ii)$ for all $f_c\in \Lambda_c, \ f_r \in
\Lambda_r$, if $w^{b}_1 , w^{b}_2 \in \bwnset_{\Lambda_s}$ then $f_c
\, w^{b}_1 \, w^{b}_2 \, f_r \in \bwnset_{\Lambda_s}$.  Note that the morphism
$\lin$ applied on binary trees from $\btset_\Lambda$ yields binary
nested words in $\bwnset_{\Lambda_s}$.



Finally, we can define the first-child next-sibling encoding of hedges
as binary trees, directly on linearizations; consider a structured
alphabet $\Sigma$ extended as $\Sigma_\bot =
(\Sigma_c\cup\{\bot_c\},\Sigma_r\cup\{\bot_r\})$. For all well-nested
words $w$ over $\Sigma$, we define $\fcns(w)$ over the alphabet
$\Sigma_\bot$ recursively as $(i)$ $\fcns(\epsilon)= \bot_c \bot_r$
and $(ii)$ $\fcns(cw_1r w_2) = c \, \fcns(w_1) \, \fcns(w_2) \,r$ for
all $w_1,w_2 \in \wnset_\Sigma$.

\vspace{2mm}
\noindent\textbf{Visibly Pushdown Transducers}
Let $\Sigma$ be a structured alphabet, and $\Delta$ be an alphabet. 
A \emph{visibly pushdown transducer} from $\Sigma$ to $\Delta$ (the class is denoted $\vpt(\Sigma,\Delta)$) is a tuple 
$A=(Q,I,F,\Gamma,\delta)$ where $Q$ is a finite set of states, $I\subseteq Q$ the set of initial states,
$F\subseteq Q$ the set of final states, $\Gamma$ the (finite) stack alphabet, $\bot\notin \Gamma$
  is the bottom stack symbol, and $\delta=\delta_c \uplus \delta_r$ is the
  transition relation where:
\begin{itemize}
\item $\delta_c \subseteq Q \times\Sigma_c \times
\Gamma\times \Delta^* \times Q$ are the {\em call transitions}, 
\item $\delta_r \subseteq Q \times\Sigma_r \times\Gamma \times
  \Delta^* \times Q$ are the {\em return transitions}.
\end{itemize}

A configuration of $A$ is a pair $(q,\sigma)$ where $q\in Q$ and
$\sigma\in\bot\cdot\Gamma^*$ is a stack content.  Let $w=a_1 \ldots
a_l$ be a (nested) word on $\Sigma$, and $(q,\sigma),(q',\sigma')$ be
two configurations of $A$.  A {\em run} of the \vpt $A$ over $w$ from
$(q,\sigma)$ to $(q',\sigma')$ is a (possibly empty) sequence of
transitions $\rho=t_1t_2\dots t_l\in\delta^*$ such that there exist
$q_0, q_1, \dots q_l\in Q$ and $\sigma_0, \dots \sigma_l\in
\bot\cdot\Gamma^*$ with $(q_0,\sigma_0)=(q,\sigma)$,
$(q_l,\sigma_l)=(q',\sigma')$, and for each $0<k\leq l$, we have
either $(i)$ $t_k=(q_{k-1},a_{k},\gamma,w_k,q_{k})\in \delta_c$ and
$\sigma_{k}=\sigma_{k-1}\gamma$, or $(ii)$ $t_k=(q_{k-1},
a_{k},\gamma,w_k, q_{k})\in \delta_r$, and
$\sigma_{k-1}=\sigma_{k}\gamma$. When the sequence of transitions is empty, 
$(q,\sigma)=(q',\sigma')$.

The \emph{output} of $\rho$ is the word $w\in \Delta^*$ defined as the
concatenation $w=w_1\ldots w_l$ when the sequence of transitions is
not empty and $\epsilon$ otherwise.  
Initial (resp. final) configurations are pairs $(q,\bot)$ with $q\in
I$ (resp. with $q\in F$).  A run is {\em accepting} if it starts in an
initial configuration and ends in a final configuration. 
The transducer $A$ defines a relation from nested words to words
defined as the set of pairs $(u,w)\in\Sigma^*\times\Delta^*$ such that
there exists an accepting run on $u$ producing $w$ as output.  From
now on, we confuse the transducer and the transduction it
represents. Note that since we accept by empty stack and there is no
return transition on empty stack, $A$ accepts only well-nested words,
and thus is included into $\wnset_\Sigma\times \Delta^*$.

\vspace{2mm}
\noindent\textbf{Hedge-to-string Transducers}
We present now a model of hedge-to-string transducers (\htos) that run directly on
hedges, and is closer to classical transducers than \vpts are. In particular, this model
is a syntactic subclass of macro forest transducers (MFT) \cite{PerSei04} with no
parameters, no swapping and no copy.

Let $\Lambda$ and $\Delta$ be two finite alphabets.
An \emph{hedge-to-string transducer} from $\Lambda$ to
$\Delta$ (the class is denoted $\htos(\Lambda,\Delta)$) is a tuple $T=(Q, I, \delta)$ where $Q$ is a set of states,
$I\subseteq Q$ is a set of initial states and $\delta$ is a set of
rules of the form:~\footnote{We consider linear and order-preserving rules only.}
$$
q(0)\rightarrow \epsilon\qquad\qquad 
q(f(x_1)\cdot x_2)\rightarrow w_1 q_1(x_1) w_2 q_2(x_2) w_3
$$
where $q,q_1,q_2\in Q$, $f\in\Lambda$ and $w,w_1,w_2,w_3\in \Delta^*$.
  
The semantics of $T$ is defined via 
mappings $\interp{q}:\hedgeset_\Lambda\rightarrow 2^{\Delta^*}$ for
all $q\in Q$ as follows:
$$
\begin{array}{ll}
  \interp{q}(0) & \! \! =\left\{ \begin{array}{cl}\{\epsilon\} &\text{ if }q(0)\rightarrow \epsilon \in \delta\\\emptyset& \text{ otherwise}\end{array}\right . \\
\\
  \interp{q}(f(h)\cdot h') & \! \! = \displaystyle{\bigcup_{\substack{q(f(x_1)\cdot x_2)\rightarrow \\ w_1 q_1(x_1) w_2 q_2(x_2) w_3}}}
  \!\!\!\!\!\!\!\!\!\!\!\!\!\!\!\!w_1 \cdot \interp{q_1}(h) \cdot w_2 \cdot \interp{q_2}(h') \cdot
  w_3\\
\end{array}
$$

The transduction of an \htos $T=(Q,I,\delta$) is defined as the relation
$\{ (h,s) \mid \exists q\in I,\ s\in\interp{q}(h) \}$.  When $s \in\interp{q}(h)$ for some 
\htos $T$, we may say that
the computation of the \htos $T$ on the hedge $h$ leads to $q$
producing $s$.  

\noindent We say that $T$ is \emph{tail-recursive} whenever in any rule, we have
$w_3=\epsilon$. We denote by \htostr the class of tail-recursive \htos.

\begin{example}\label{ex:mirror}
Let $\Lambda$ be a finite alphabet. Consider $T_1\in \htos(\Lambda,\Lambda)$ defined by $Q=I=\{q,q'\}$  and the following rules, for all $f\in\Lambda$:
$$
q(0)\rightarrow \epsilon \qquad 
q'(0)\rightarrow \epsilon \qquad 
q(f(x_1)\cdot x_2) \rightarrow q'(x_1)q(x_2)f
$$
The domain of $T_1$ is the set of strings over $\Lambda$ (viewed as a particular case
of hedges) and  
$T_1$ defines the mirror image of strings.
\end{example}

\begin{example}
  Let $\Lambda$ be a finite alphabet and $\Lambda_s$ be its structured
  version. We define $T_2\in \htos(\Lambda,\Lambda_s)$ which can non-deterministically
  root any subhedge of the input hedge under a new symbol $\#$ and output the linearization of
the new hedge. For instance, the input tree $f(abcd)$ can be non-exhaustively translated into
the string $\lin(f(a\#(bc)d))$ or the string $\lin(f(\#(ab)\#(cd)))$. Formally, $T_2$ is defined by
  $Q=\{q_{0},q_{1},q_{2}\}$, $I=\{q_{0}\}$ and $\delta$ defined as the following set of rules: (observe that $T_2\in \htostr$)
$$
\begin{array}{ll}
q_i(0)\rightarrow \epsilon \quad \forall i\in \{0,2\}  &
q_{0}(f(x_1)\cdot x_2) \rightarrow c_f q_{0}(x_1) r_f q_{0}(x_2) \\
q_0(f(x_1) \cdot  x_2) \rightarrow c_{\#} c_f q_2(x_1) r_f r_{\#} q_0(x_2)&
q_{0}(f(x_1)\cdot x_2) \rightarrow c_{\#} c_f q_{2}(x_1) r_f q_{1}(x_2) \\
q_{1}(f(x_1)\cdot x_2) \rightarrow c_f q_{2}(x_1) r_f r_{\#} q_{0}(x_2)   &
 q_{1}(f(x_1)\cdot x_2) \rightarrow c_f q_{2}(x_1) r_f q_{1}(x_2) \\
q_{2}(f(x_1)\cdot x_2) \rightarrow c_f q_{2}(x_1) r_f q_{2}(x_2)
\end{array}
$$
\end{example}

\vspace{2mm}
\noindent\textbf{Hedge-to-hedge Transducers}
We consider now transducers running on hedges but producing
(representations of) hedges as well-nested words. We define them as
restrictions of the two models we have considered so far. 

We assume the output alphabet $\Delta$ to be structured as
$(\Delta_c,\Delta_r)$. We define an $\htos(\Lambda,\Delta)$ to be an hedge-to-hedge
transducer ($\htoh(\Lambda,\Delta)$) if any rhs $w_1 q_1(x_1) w_2
q_2(x_2) w_3$ of its transition rules satisfies $w_1w_2w_3 \in
\wnset_\Delta$. We denote $\htohtr$ the class of $\htoh$ that are
additionally tail-recursive. 

Using the direct relationship between well-nested words and hedges, we
may define hedge-to-hedge transducers by means of a restriction in the
definition of \vpt: this restriction asks the nesting level of the
input and the output words to be synchronized, that is the nesting
level of the output just before reading a call (on the input) must be
equal to the nesting level of the output just after reading the
matching return (on the input).  This simple syntactic restriction
yields a subclass of \vpts \cite{MFCS2010}.

This synchronization is enforced syntactically on stack symbols, these
symbols being shared by matching call and return transitions. 

Let $A=(Q,I,F,\Gamma,\delta) \in \vpt(\Sigma,\Delta)$. Then $A$ is
\emph{well-nested} if for all $\gamma\in \Gamma$, all transitions $(q,c,\gamma,w,q')\in \delta_c$ and
$(p,r,\gamma,w',p')\in\delta_r$,  it holds that $ww' \in \wnset_\Delta$.
We denote by $\wnvpt$ the class of well-nested \vpts.

\vspace{2mm}
\noindent\textbf{Hedge-to-binary tree Transducers}
We consider transducers running on hedges and producing
(representations of) binary trees as binary well-nested words. We
define them as restrictions of hedge-to-hedge transducers.

Let  $\Delta^\bot = (\Delta^\bot_c,\Delta^\bot_r) $ be a structured output
alphabet such that $\Delta^\bot_c,\Delta^\bot_r$ contain two special symbols 
$\bot_c,\bot_r$ respectively. We define a  $\htoh(\Lambda,\Delta^\bot)$ to be an
hedge-to-binary tree transducer ($\htob(\Lambda,\Delta^\bot)$) if any right
hand-side $w_1 q_1(x_1) w_2 q_2(x_2) w_3$ of its transition rules
satisfies $w_1 = c\, w'_1$, $w_2=w_1'' w_2'$,  $w_3=w_3'\, r$ for
some $c$ in $\Delta^\bot_c$, $r$ in $\Delta^\bot_r$,  $w'_1\bot_c\bot_r w_1''$ and 
$w_2'\bot_c\bot_rw_3'$ in $\bwnset_{\Delta^\bot}$. 

Hedge-to-binary tree transducers are close to linear and 
order-preserving top-down ranked tree transducers. They
will serve us to compare the expressiveness of $\htoh$ to this latter class
of transducers defined on the first-child next-sibling encoding of input
and output hedges.

\section{Some Results on Expressiveness}
\label{sec:results}


In the sequel, we assume that input hedges accepted by transducers are non-empty.
This restriction is done without loss of generality. 
We depict on Figure \ref{fig:compexp} the results we obtained.

\begin{figure*}
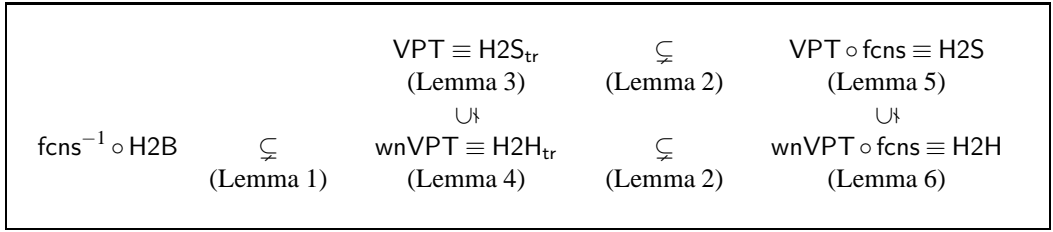


\centering
\small

\begin{tabular}{|c|}
\hline
\\

  \begin{tabular}{ccccc}
& &   \begin{tabular}{c}
        $ \vpt \equiv \htostr$
     \end{tabular}
  & $\subsetneq$ & 
     \begin{tabular}{c}
       $\vpt \circ \fcns \equiv \htos $
     \end{tabular}
   \\
& & \begin{tabular}{c}
(Lemma~\ref{lem:htostrvsvpt})\end{tabular}
& (Lemma~\ref{lem:htosvsvpt})& 
 \begin{tabular}{c}
(Lemma~\ref{lem:fcns-htostrvsvpt})
\end{tabular}
\\ 
& & 
\rotatebox[origin=c]{90}{$\subsetneq$} & & \rotatebox[origin=c]{90}{$\subsetneq$}\\
$\fcns^{-1} \circ \htob$ & $\subsetneq$ &  
\begin{tabular}{c}
$ \wnvpt 
\equiv \htohtr $
 \end{tabular}
& $\subsetneq$ & 
 \begin{tabular}{c}
 $\wnvpt \circ \fcns
\equiv \htoh$
 \end{tabular}
\\
& (Lemma~\ref{lem:htobtrvshth}) &
 \begin{tabular}{c}
(Lemma~\ref{lem:htohtrvswnvpt})\end{tabular}
& (Lemma~\ref{lem:htosvsvpt})& 
 \begin{tabular}{c}
(Lemma~\ref{lem:fcns-htohtrvswnvpt})
\end{tabular} \\ 
\end{tabular} \\ 
\\
\hline
 
\end{tabular}

\caption{\label{fig:compexp}Expressivness results in a nutshell}

\end{figure*}

\subsection{Definitions of expressiveness}

Let $\Sigma$ be a structured alphabet and $\Delta$ be a finite
alphabet. We denote by ${\cal T}(\wnset_\Sigma,\Delta^*)$ the set of
transductions from $\wnset_\Sigma$ to $\Delta^*$. First observe that
the semantics of a transducer $A\in \vpt(\Sigma,\Delta)$ is an element
of ${\cal T}(\wnset_\Sigma,\Delta^*)$. Second, given a transducer
$T\in \htos(\Sigma_c\times\Sigma_r,\Delta)$, we have that $T\circ
\hedge \in {\cal T}(\wnset_\Sigma,\Delta^*)$. 
Hence, up to the mapping
$\hedge$, we can thus compare the expressiveness of a subclass ${\cal
  C}_1$ of $\vpt(\Sigma,\Delta)$ and of a subclass ${\cal C}_2$ of
$\htos(\Sigma_c\mytimes\Sigma_r,\Delta)$, by their interpretation as
transductions from $\wnset_\Sigma$ to $\Delta^*$.

Formally, given $A\in \vpt(\Sigma,\Delta)$ and $T\in
\htos(\Sigma_c\mytimes\Sigma_r,\Delta)$, we say that $A$ and $T$ are
equivalent, denoted $A \equiv T$, whenever $A = T \circ \hedge$.
Given a subclass ${\cal C}_1$ of $\vpt(\Sigma,\Delta)$ and a
subclass ${\cal C}_2$ of $\htos(\Sigma_c\mytimes\Sigma_r,\Delta)$, we
say that ${\cal C}_1$ is more expressive than ${\cal C}_2$ (resp. less
expressive), denoted ${\cal C}_1 \supseteq {\cal C}_2$ (resp. ${\cal
  C}_1 \subseteq {\cal C}_2$), whenever we have:
\begin{itemize}
\item for every $T\in {\cal C}_2$, there exists $A\in {\cal C}_1$ such that $A\equiv T$
\item for every $A\in {\cal C}_1$, there exists $T\in {\cal C}_2$ such that $A\equiv T$, respectively
\end{itemize}
Last, we write ${\cal C}_1 \equiv {\cal C}_2$ whenever ${\cal C}_1$
and ${\cal C}_2$ are expressively equivalent meaning that both
${\cal C}_1 \supseteq {\cal C}_2$ and ${\cal C}_1 \subseteq {\cal
  C}_2$ hold.

\subsection{Comparing expressiveness}

We first recall in the framework we proposed here a known
expressiveness result \cite{Servais11} comparing 
$\htoh$ and $\htob$.

\begin{lemma}\label{lem:htobtrvshth} Let $\Delta=(\Delta_c,\Delta_r)$
  and $\Delta^\bot= (\Delta_c \cup \{\bot_c\},\Delta_r \cup
  \{\bot_r\})$ be two structured alphabets. 
\begin{enumerate}
\item \label{item:lesshtobtrvshth}

For any $T \in \htob(\Lambda,\Delta^\bot)$, there exists $T' \in 
  \htoh(\Lambda,\Delta)$ such that $T'=\fcns^{-1} \circ
  T$.
\item \label{item:strictlyhtobtrvshth} 
There exists $T' \in \htoh(\Lambda,\Delta)$ such that there is no $T
\in \htob(\Lambda,\Delta^\bot)$ satisfying $T'=\fcns^{-1} \circ T$. 
\end{enumerate}
\end{lemma}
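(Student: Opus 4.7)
For item~1, the approach is to translate each $\htob$ rule into an $\htoh$ rule using $\fcns^{-1}$ on the output binary tree. The $\htob$ rule $q(f(x_1) \cdot x_2) \rightarrow c\,w_1'\,q_1(x_1)\,w_1''\,w_2'\,q_2(x_2)\,w_3'\,r$ produces, after plugging the recursive results, the linearization of a binary tree $f'(t_1\,t_2)$, where $(c,r)$ encodes some letter $f'$ and each $t_i$ is obtained by inserting the binary tree $q_i(x_i)$ at the $\bot$-position of the binary-tree context $w_i'\,\bot_c\,\bot_r\,w_i''$. Decoding via $\fcns^{-1}$ yields the hedge $f'(\fcns^{-1}(t_1)) \cdot \fcns^{-1}(t_2)$, in which each $\fcns^{-1}(t_i)$ is itself a hedge context with a single hole (at the position corresponding to the original $\bot$) filled by $\fcns^{-1}(q_i(x_i))$. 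The $\htoh$ rule for a new state $q'$ mirroring $q$ then substitutes fresh recursive calls $q_1'(x_1)$ and $q_2'(x_2)$ into those holes; well-nestedness of the non-recursive part of the right-hand side is automatic since the overall output is a hedge linearization. A routine induction on the input hedge then confirms $\interp{q'}_{T'}(h) = \fcns^{-1}(\interp{q}_T(h))$ for every $q$ and every $h$.

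For item~2, I propose as a separating transduction $T' \in \htoh(\{f\},\Delta)$, where $\Delta=(\{c_a\},\{r_a\})$, a single state $q$ and rules $q(0) \rightarrow \epsilon$ and $q(f(x_1) \cdot x_2) \rightarrow c_a\,q(x_1)\,q(x_2)\,r_a$. A direct induction on $n$ shows that on the input $f^n$ (the hedge consisting of $n$ copies of the leaf~$f$), the output $T'(f^n)$ is the single tree $a(a(\cdots a))$ of depth~$n$, so that $\fcns(T'(f^n))$ is a binary tree whose left subtree at the root has depth $n-1$, unbounded in $n$.

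Assume for contradiction that some $T \in \htob$ satisfies $T' = \fcns^{-1} \circ T$. On input $f^n = f(0) \cdot f^{n-1}$, the variable $x_1$ is bound to $0$, so any rule $q_0(f(x_1) \cdot x_2) \rightarrow c\,w_1'\,q_1(x_1)\,w_1''\,w_2'\,q_2(x_2)\,w_3'\,r$ applied at the initial state of $T$ produces an output whose left subtree of the root is exactly $w_1' \cdot q_1(0) \cdot w_1''$, where $q_1(0)$ is the fixed binary tree output of $q_1$ on the empty hedge. This left subtree therefore has size bounded by the maximum of $|w_1'| + |q_1(0)| + |w_1''|$ over the finitely many rules and states of $T$, a bound uniform in $n$. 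The main obstacle is to pit this uniform bound against the linearly growing depth of the left subtree of $\fcns(T'(f^n))$: for $n$ large enough no $\htob$ rule can produce the required left subtree, which yields the desired contradiction.
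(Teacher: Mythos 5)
Your proof of item~1 is essentially the paper's: apply $\fcns^{-1}$ to the right-hand sides of the rules while keeping the recursive calls in place; your elaboration of why the decoding respects the two holes is a correct expansion of what the paper states in one line. For item~2, however, you take a genuinely different route. The paper separates with a \emph{flattening} transducer ($T'(f(h_1)h_2)=c_fr_f\,T'(h_1)T'(h_2)$) evaluated on hedges of height $n$ and size $2^n$: the output is flat of length $2^{n+1}$, its \fcns-image has height $2^{n+1}$, and this contradicts a purely quantitative fact that every $\htob$ satisfies a linear height bound. You instead use a \emph{deepening} transducer that turns the flat hedge $f^n$ into a unary chain of depth $n$, and argue structurally about the first subtree of the root of the output. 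Note that the paper's height argument would \emph{not} work on your example: the required output height on $f^n$ is only linear in the length of the sibling chain of the input, which an $\htob$ can achieve. So you cannot fall back on the paper's quantitative fact; your argument stands or falls with the structural claim about the first subtree.

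That claim is where there is a genuine gap. You assert that the left subtree of the root of $T(f^n)$ is ``exactly $w_1'\cdot q_1(0)\cdot w_1''$, where $q_1(0)$ is the fixed binary tree output of $q_1$ on the empty hedge.'' Under the paper's definitions the only leaf rules are $q(0)\rightarrow\epsilon$, so $\interp{q_1}(0)=\{\epsilon\}$, and the segment in question is $w_1'w_1''$. But deleting the hole $\bot_c\bot_r$ from the context $w_1'\bot_c\bot_r w_1''\in\bwnset_{\Delta^\bot}$ generally destroys membership in $\bwnset_{\Delta^\bot}$ (a node is left with the wrong arity), so the first subtree of the output --- which is determined by the unique decomposition of the output \emph{string} as $c\,t_1\,t_2\,r$, not by the syntactic shape of the rule --- need not coincide with $w_1'w_1''$; a priori it could be empty (when $w_1'=w_1''=\epsilon$) and $t_1$ could then absorb the large word $\interp{q_2}(f^{n-1})$, exactly the scenario your bound must exclude. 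The example does survive a careful case analysis (when $w_1'w_1''\neq\epsilon$ it is a nonempty balanced prefix, so $|t_1|\le|w_1'w_1''|$ is uniformly bounded; when $w_1'w_1''=\epsilon$ the condition $w_2'\bot_c\bot_r w_3'\in\bwnset_{\Delta^\bot}$ forces the output to have only one subtree under the root, hence not to lie in the range of $\fcns$), but none of this is in your proof, and it is precisely the load-bearing step. Either carry out that case analysis, or switch to the paper's family of inputs, where the crude height bound already yields the contradiction without identifying any particular subtree.
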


\begin{proof}
For Point (\ref{item:lesshtobtrvshth}), it is enough to apply
$\fcns^{-1}$ to the right-hand side of transition rules of $T$
(keeping sub-expressions $(q(x_i)$ unchanged) to obtain $T'$. 
For Point (\ref{item:strictlyhtobtrvshth}), for any well-nested word
$u$ let us define its size $|u|$ as the number of symbols occurring in it
and its height $||u||$ as: 
$(i)$ $||u||=0$ if $u=\epsilon$ and $(ii)$ $||cvrw||=max(1+||v||,||w||)$
if $u=cvrw$. Size and height can be defined on hedges $h$
accordingly by considering size and heigth of $\lin(h)$. The following facts
can easily be proved: (Fact 1) One can devise a transducer $T'$ that flattens
its input into a sequence ($T'(f(h_1)h_2)=c_fr_f\,T'(h_1)T'(h_2)$). Then, 
$|T'(h)|=2|h|$ and $||T'(h)||=1$. (Fact 2) For all $T$ in $\htob(\Lambda,\Delta^\bot)$,
there exists $k_T$ in $\mathbb{N}$ such that for all hedges $h$,
$||T(h)|| \le k_T ||h||$; (Fact 3) If $w \in \wnset_\Delta$, $||w||=1$ and $|w|=n$ then
$||\fcns(w)||=n$.

Now, consider the family $H_{n\mid n \in \mathbb{N}}$ of hedges $h$
such that $||h||=n$ and $|h|=2^n$. For any $h \in H_n$,  
$|T'(h)|=2^{n+1}$ and $||T'(h)||=1$. Hence, $||\fcns(T'(h))||=2^{n+1}$. 
Assuming that $T$ exists yields $||\fcns(T'(h))||=2^{n+1}=||T(h)||
  \le k_T n$ for some constant $k_T$ for all $n$. Contradiction.
\end{proof}

It turns out that \htos are stricly more expressive than \vpts. Formally:

\begin{lemma}\label{lem:htosvsvpt}
    There exists $T \in \htos(\Sigma_c\mytimes \Sigma_r,\Delta)$ such that for all $A\in\vpt(\Sigma,\Delta)$, 
    $T \circ \hedge \not\equiv A$. 
\end{lemma}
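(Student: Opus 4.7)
The plan is to use the transducer $T_1$ of Example~\ref{ex:mirror} (with $|\Lambda|\ge 2$) as the witness $T$. Recall that $T_1$ computes the mirror of strings over $\Lambda$, where a string $a_1\cdots a_n$ is viewed as the flat hedge $a_1(0)\cdots a_n(0)$. The linearization of such a hedge is $w_n = c_{a_1}r_{a_1}\cdots c_{a_n}r_{a_n}$, and via $\hedge$ one recovers the original flat hedge over $\Sigma_c\times\Sigma_r$. Hence $T_1\circ \hedge$ must send $w_n$ to the word $a_n\cdots a_1$.

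Suppose for contradiction that some $A\in \vpt(\Sigma,\Lambda)$ realises $T_1\circ\hedge$. The key structural observation is that on every flat input $w_n$, each call symbol is immediately matched by its corresponding return, so the stack of $A$ never exceeds height one. Consequently the pushdown carries at most one informative symbol throughout any run on such an input. By absorbing this single symbol into the control state and grouping each matched pair $(c_{a_i}, r_{a_i})$ into a single input letter $a_i$, I would compile the restriction of $A$ to flat inputs into an \nft $B$ over $\Lambda$, which would necessarily realise the string mirror function on $\Lambda^*$.

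To conclude, I would invoke the classical result that the mirror function over an alphabet of size at least two is not realisable by any \nft, that is, it is not a rational function. The intuition is that for inputs of the form $a^n b$ the image $ba^n$ has length $n+1$, but a pumping argument on the finite state space of $B$ (together with the fact that each transition produces a bounded amount of output) shows that no \nft can match arbitrarily long mirrored outputs with its bounded per-transition output capacity. This yields the required contradiction.

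The main obstacle is the first reduction: I need to verify carefully that the boundedness of $A$'s stack on flat inputs really lets me fold the single active stack symbol into the state while preserving the pairing between a call transition and the matching return transition that pops exactly the symbol it pushed. Once $B$ is in hand, the non-rationality of the string mirror is standard.
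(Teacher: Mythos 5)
Your proposal is correct and follows essentially the same route as the paper: take the mirror transducer of Example~\ref{ex:mirror} as the witness, observe that on flat inputs of the form $c_{a_1}r_{a_1}\cdots c_{a_n}r_{a_n}$ the stack stays bounded so the \vpt collapses to a finite state transducer, and conclude by the classical fact that the mirror function is not rational. The reduction you flag as the ``main obstacle'' is exactly the step the paper also treats as routine, so nothing essential is missing.
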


\begin{proof}
  Consider the variant over the input alphabet $\Sigma_c \times
  \Sigma_r$ of the transducer $T$ defined in Example
  \ref{ex:mirror}. It is easy to see this transducer produces an
  output (after an \hedge application) only on nested words from
  $(\Lambda_c. \Lambda_r)^*$. Over such input words, any \vpt admits
  only finitely many configurations in its accepting runs and thus, is
  equivalent to some finite state transducer.  But it is well known that
  finite state transducer can not compute the mirror image of its
  inputs.
\end{proof}

Informally, this is due to the abitility that \htos have to
"complete'' the output once the current hedge is processed. This
ability vanishes when tail-recursive \htos are considered.

\begin{lemma}\label{lem:htostrvsvpt}
    $\vpt(\Sigma,\Delta) \equiv \htostr(\Sigma_c\mathord\times \Sigma_r,\Delta)$. 
\end{lemma}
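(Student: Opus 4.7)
The plan is to prove the two inclusions by direct constructions and to verify them by induction on the structure of the (well-nested) input.

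\textbf{Direction $\htostr \subseteq \vpt$.} Given $T = (Q, I, \delta) \in \htostr(\Sigma_c \mytimes \Sigma_r, \Delta)$, I build an equivalent $A \in \vpt(\Sigma, \Delta)$ by keeping the same state set $Q$, the same initial states $I$, and taking as final states $F = \{ q \in Q \mid q(0) \rightarrow \epsilon \in \delta \}$. The delicate point is that each \htos rule on a node labelled $(c, r)$ couples the call symbol $c$ with its matching return $r$, whereas a \vpt must commit to the prefix output $w_1$ upon reading $c$ alone. I resolve this via nondeterminism, using the stack as a commitment device: for each rule $q((c, r)(x_1) \cdot x_2) \to w_1\, q_1(x_1)\, w_2\, q_2(x_2)$ (note $w_3 = \epsilon$), I introduce a fresh stack symbol $\gamma_{r, q_2, w_2}$, the call transition $(q, c, \gamma_{r, q_2, w_2}, w_1, q_1)$ and, for every $p \in F$, the return transition $(p, r, \gamma_{r, q_2, w_2}, w_2, q_2)$. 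Restricting the popping state to $F$ enforces that the subhedge just consumed can be legitimately terminated via some $p(0) \to \epsilon$ rule. The stack alphabet is finite because only finitely many such triples appear in rules.

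\textbf{Direction $\vpt \subseteq \htostr$.} Given $A = (Q, I, F, \Gamma, \delta_A)$, I take as \htos states the pairs $(p, q) \in Q \times Q$; the pair $(p, q)$ is intended to summarise runs of $A$ on a well-nested factor starting in state $p$ and returning to state $q$ with an unchanged stack. The initial states are $\{(q_0, q_f) \mid q_0 \in I,\ q_f \in F\}$; there is a rule $(p, p)(0) \to \epsilon$ for every $p \in Q$; and for every pair of a call $(p, c, \gamma, u_1, p_1) \in \delta_c$ and a return $(q_1, r, \gamma, u_2, p_2) \in \delta_r$ sharing the same $\gamma$, and every $q \in Q$, a rule $(p, q)((c, r)(x_1) \cdot x_2) \to u_1\, (p_1, q_1)(x_1)\, u_2\, (p_2, q)(x_2)$. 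All these rules are tail-recursive.

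\textbf{Correctness and main obstacle.} Both constructions are checked by induction on $w$, using the unique decomposition $w = c\, w_1\, r\, w_2$ of a non-empty well-nested word and its image $\hedge(w) = (c,r)(\hedge(w_1)) \cdot \hedge(w_2)$. The base case $w = \epsilon$ is immediate in both directions. For the inductive step, the visibly pushdown discipline guarantees that any run of a \vpt on $w$ decomposes uniquely as a call on $c$ pushing some $\gamma$, a balanced run on $w_1$ that leaves the stack unchanged, the matching return on $r$ popping the same $\gamma$, and a balanced run on $w_2$; this factorisation mirrors exactly the binary rule on $(c,r)(x_1)\cdot x_2$, and tail-recursivity ($w_3 = \epsilon$) is sufficient because no further output is produced between the end of the $w_2$-factor and the end of the whole subword. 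The main technical difficulty is the bookkeeping in the direction $\htostr \subseteq \vpt$, where the \vpt must commit to outputting $w_1$ without yet having seen $r$; storing the whole triple $(r, q_2, w_2)$ inside the pushed stack symbol and gating return transitions by $F$ is precisely what circumvents this obstacle.
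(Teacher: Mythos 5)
Your proof is correct and follows essentially the same route as the paper's: pairs of \vpt states summarising balanced runs for the direction $\vpt\subseteq\htostr$, and stack symbols recording the pending return data with return transitions gated by the leaf-rule states for the converse. The only cosmetic differences are that the paper restricts the pair states to realisable ones and pushes the entire rule onto the stack rather than the triple $(r,q_2,w_2)$; neither affects correctness.
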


\begin{proof}[(Sketch)]
  Intuitively, in order to transform $A \in \vpt(\Sigma,\Delta)$ into
  $T \in \htostr(\Sigma_c\mathord\times \Sigma_r,\Delta) $, we proceed
  as follows. States of $T$ are pairs of states of $A$, corresponding
  to states reached respectively at the beginning and at the end of
  the processing of an hedge. More formally, the following rule will
  exist in $T$ iff there exist a call transition on $c$ from $p$ to
  $p_1$, a matching return on $r$ from $p_2$ to $q_1$, the hedge
  represented by $x_1$ (resp. by $x_2$) can be processed from state
  $p_1$ to state $p_2$ (resp. from $q_1$ to $q$):
	 $$
	 (p,q)((c,r)(x_1)\cdot x_2) \rightarrow w_1 \cdot (p_1,p_2)(x_1) \cdot w_2 \cdot (q_1,q)(x_2)
	 $$ 
	 
	 The word $w_1$ (resp. $w_2$)
	is the output of the call transition (resp. of the return transition).
 It is worth observing that this encoding 
	directly implies the tail-recursive property of $T$.
	
	The converse construction follows the same ideas. The stack is used to store the transition used on the call symbol, to recover it when reading the return symbol. 
\end{proof}

Lemma \ref{lem:htosvsvpt} still holds even if we restrict 
\htos to \htoh, because the transducer defining the transduction of Example \ref{ex:mirror} is
actually an \htoh. Similarly, Lemma \ref{lem:htostrvsvpt} also holds when
restricted to hedge-to-hedge transductions (the same constructions apply):

\begin{lemma}\label{lem:htohtrvswnvpt}
    $\wnvpt(\Sigma,\Delta) \equiv \htohtr(\Sigma_c\mathord\times \Sigma_r,\Delta)$.
\end{lemma}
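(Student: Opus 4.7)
The plan is to adapt the two constructions sketched in the proof of Lemma~\ref{lem:htostrvsvpt} and verify that they preserve the well-nestedness constraints on both sides. Recall that in the construction turning a \vpt $A$ into an $\htostr$ $T$, the states of $T$ are pairs $(p,q)$ of states of $A$, and for every matching pair of a call transition $(p,c,\gamma,w_1,p_1)\in\delta_c$ and a return transition $(p_2,r,\gamma,w_2,q_1)\in\delta_r$ sharing the same stack symbol $\gamma$, together with compatible sub-computations on $x_1$ (from $p_1$ to $p_2$) and on $x_2$ (from $q_1$ to $q$), $T$ contains the rule
\[
(p,q)\bigl((c,r)(x_1)\cdot x_2\bigr)\rightarrow w_1\,(p_1,p_2)(x_1)\,w_2\,(q_1,q)(x_2).
\]
For the converse, given a $\htostr$ rule $q((c,r)(x_1)\cdot x_2)\to w_1\, q_1(x_1)\, w_2\, q_2(x_2)$, one introduces a fresh stack symbol $\gamma$ encoding the rule, adds the call transition $(q,c,\gamma,w_1,q_1)$ together with matching return transitions $(f,r,\gamma,w_2,q_2)$ for every hedge-terminating state $f$ (states $f$ satisfying $f(0)\to\epsilon\in\delta$ become final in the \vpt), and takes initial states to be $I$.

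The first step is to observe that if $A\in\wnvpt$, then in the forward construction the outputs of every matched call/return pair sharing stack symbol $\gamma$ satisfy $w_1 w_2\in\wnset_\Delta$ by the very definition of well-nestedness. Since the produced rule is tail-recursive ($w_3=\epsilon$), the concatenation $w_1 w_2 w_3 = w_1 w_2$ lies in $\wnset_\Delta$, so the rule belongs to $\htohtr$. Conversely, if $T\in\htohtr$, then each rule contributes exactly one pair of matched transitions sharing its own fresh stack symbol, and by the $\htoh$ condition on the rule, $w_1 w_2 = w_1 w_2 w_3\in\wnset_\Delta$; hence every stack symbol of the constructed \vpt matches call and return outputs into a well-nested word, and the \vpt is in $\wnvpt$.

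The semantic equivalence $A\equiv T$ is then established exactly as in Lemma~\ref{lem:htostrvsvpt}, by induction on the structure of the input hedge, since the two constructions are syntactically identical to those sketched there; only the side condition on outputs has been added, and it has been verified to hold on the translated objects. I do not expect any significant obstacle: the well-nested constraint on $\wnvpt$ was defined precisely so as to mirror the shape of $\htoh$ right-hand sides, and the constructions of Lemma~\ref{lem:htostrvsvpt} align matched call/return pairs with single rules in exactly the way needed for the two well-nestedness conditions to coincide.
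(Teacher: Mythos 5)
Your proposal is correct and follows essentially the same route as the paper: the appendix proof of Lemma~\ref{lem:htostrvsvpt} is reused verbatim, with the observation that a matched call/return pair sharing a stack symbol corresponds to a single rule, so the $\wnvpt$ condition $w_1w_2\in\wnset_\Delta$ and the $\htohtr$ condition $w_1w_2w_3=w_1w_2\in\wnset_\Delta$ transfer to each other in both directions (the converse using one fresh stack symbol per rule, exactly as you describe). No gap.
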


\paragraph{Removing the tail-recursive assumption}
As we have seen in the proof of Lemma~\ref{lem:htostrvsvpt},
the behavior of a \vpt is naturally encoded by a tail-recursive \htos. Intuitively, the word $w_3$
of rules of \htos should be produced after having processed the whole hedge.

We prove now that if we run \vpts on the \fcns encoding of hedges,
then we can express any \htos-definable transduction. Intuitively, in the \fcns encoding,
the return symbol of the root of the first tree of the hedge is encountered at the end of the processing of the hedge. As a consequence, the word $w_3$ can be output when processing this symbol. Formally, we have:

\begin{lemma}\label{lem:fcns-htostrvsvpt}
    $\vpt(\Sigma_\bot,\Delta) \circ \fcns \equiv \htos(\Sigma_c\mathord\times \Sigma_r,\Delta)$. 
\end{lemma}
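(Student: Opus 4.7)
The plan is to prove both inclusions separately. The crucial structural observation is that for a hedge $h = f(h_1)\cdot h_2$, the \fcns encoding satisfies $\fcns(\lin(h)) = c_f\, \fcns(\lin(h_1))\, \fcns(\lin(h_2))\, r_f$, so the return $r_f$ matching the topmost call $c_f$ appears \emph{after} both sub-hedges have been processed. This is exactly what allows, in contrast with Lemma~\ref{lem:htostrvsvpt}, the trailing word $w_3$ of an \htos rule $q(f(x_1)\cdot x_2) \to w_1 q_1(x_1) w_2 q_2(x_2) w_3$ to be output on this final return transition of the \vpt.

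For the inclusion \htos $\subseteq$ \vpt$\circ$\fcns, given $T$ I will build a \vpt $A$ whose states are pairs $[q,C]^?$ and $[q,C]^!$ (entry/exit) where $C$ is a \emph{continuation} describing what remains to do after the current hedge: either $\top$ (top-level), $\uparrow$ (next event is the enclosing return), or $(w,q')$ (next event is to output $w$ and start a sibling hedge in state $q'$). Because \vpt call transitions cannot consult the stack, this continuation must sit in the state; the set of possible continuations is finite, determined by $T$'s rules. For each rule $\pi: q(f(x_1)\cdot x_2) \to w_1 q_1(x_1) w_2 q_2(x_2) w_3$ with $f=(c,r)$, I add (i) a call on $c$ from $[q,C]^?$ pushing a symbol encoding $(\pi,C)$, emitting $w_1$, moving to $[q_1,(w_2,q_2)]^?$; (ii) a ``sibling-start'' call from $[q_1,(w_2,q_2)]^!$ that simultaneously fires a rule $\pi'$ for $q_2$ on the next call symbol and emits $w_2$ concatenated with $\pi'$'s leading word; (iii) a return on $r$ from $[q_2,\uparrow]^!$ popping $(\pi,C)$, emitting $w_3$, moving to $[q,C]^!$. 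Empty-hedge rules $q(0)\to\epsilon$ are simulated by a pair of silent transitions on $\bot_c,\bot_r$; initial and final configurations are $[q_0,\top]^?$ and $[q_0,\top]^!$ for $q_0\in I$.

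For the converse \vpt$\circ$\fcns $\subseteq$ \htos, given $A$ with state set $P$, I take \htos states to be pairs $(p,p')\in P\times P$, meaning ``there is a run of $A$ on the \fcns encoding of the current hedge from $p$ to $p'$''. For each pair of matched transitions $(p,c,\gamma,w_c,p_1)\in\delta_c$ and $(p_3,r,\gamma,w_r,p')\in\delta_r$ (same stack symbol $\gamma$) and each intermediate state $p_2\in P$, add the rule
\[
(p,p')\bigl((c,r)(x_1)\cdot x_2\bigr) \;\to\; w_c\cdot (p_1,p_2)(x_1)\cdot (p_2,p_3)(x_2)\cdot w_r.
\]
Here the middle word is $\epsilon$ (no symbol is read between $\fcns(\lin(v_1))$ and $\fcns(\lin(v_2))$), but $w_3=w_r$ is in general non-trivial, which is precisely the expressiveness missing from \htostr.

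The main obstacle is in this second direction: the \vpt may produce a non-empty word on the encoding $\bot_c\bot_r$ of the empty hedge, while the \htos syntax only permits $q(0)\to\epsilon$. My fix is to split each state $(p,p')$ into a non-empty variant $(p,p')^{\textnormal{ne}}$ (appearing only in non-$0$ rules) and empty variants $(p,p')^{(v)}$, one for each $v$ that $A$ can output running from $p$ to $p'$ on $\bot_c\bot_r$. Each $(p,p')^{(v)}$ has only the rule $(p,p')^{(v)}(0)\to\epsilon$, while the parent rules above are duplicated over all choices of empty-versus-non-empty for $x_1$ and $x_2$, inlining the relevant $v$ into the constant words of the rhs. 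Correctness of both constructions is then established by a straightforward induction on the structure of the input hedge, using the recursive equation for $\fcns$ to align \vpt runs with \htos derivations.
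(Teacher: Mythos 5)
Your proposal follows essentially the same route as the paper's: the $\vpt\to\htos$ direction is the paper's pair-of-states construction verbatim (with $w_r$ placed as the trailing word and the same state-splitting/inlining trick for non-empty outputs on $\bot_c\bot_r$), and your continuation-annotated states for the converse are a repackaging of the paper's (rule, position) bookkeeping pushed onto the stack. The one spot to tighten is that the $\bot_c\bot_r$ block encoding an empty sibling hedge $x_2$ cannot be fully silent --- it must still emit the pending $w_2$ carried in the continuation --- but that is a one-line adjustment consistent with your rule (ii).
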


\begin{proof}[(Sketch)]
A construction similar to the one presented in the proof of Lemma~\ref{lem:htostrvsvpt},
based on pairs of states of the \vpt,
can be used to build an equivalent \htos. It will not necessarily be tail-recursive as the output of the return transition will be produced last. Note also that to handle empty subtrees encoded 
by $\bot_c\bot_r$, the resulting \htos may associate a non-empty output word to leafs.
It is however not difficult to simulate such rules.

Conversely, the construction is a bit more complex. States of the \vpt store the rule that is applied at the previous level, and the position in this rule (beginning, middle, or end). A special case is that of the first level, as there is no previous level. In this case, we store the initial state we started from. This information is stored in the stack, so as to recover it and faithfully simulate the
application of the rule. The case of rules associated with leafs is handled using the $\bot_c,\bot_r$ symbols, and dedicated rules. Details can be found in the Appendix.
\end{proof}

\begin{lemma}\label{lem:fcns-htohtrvswnvpt}
    $\wnvpt(\Sigma_\bot,\Delta)\circ \fcns \equiv \htoh(\Sigma_c\mathord\times \Sigma_r,\Delta)$.
\end{lemma}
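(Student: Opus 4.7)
My plan is to mirror Lemma~\ref{lem:fcns-htostrvsvpt}, adapting each direction to the well-nested setting.

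For $\wnvpt(\Sigma_\bot,\Delta)\circ\fcns\subseteq\htoh$, I apply the pair-of-states construction sketched for Lemma~\ref{lem:fcns-htostrvsvpt}. Each produced rule has the shape
$$(p,q)((c_f,r_f)(x_1)\cdot x_2)\to w_1\,(p_1,p_2)(x_1)\,w_2\,(p_2,p')(x_2)\,w_3,$$
where $w_1$ is the output of a call transition at $c_f$ pushing some $\gamma$, $w_3$ the output of the matching return transition at $r_f$ popping $\gamma$, and $w_2=\epsilon$ because in $\fcns(c_f\,u_1\,r_f\,u_2)=c_f\,\fcns(u_1)\,\fcns(u_2)\,r_f$ no input symbol separates the two child encodings. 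The \wnvpt condition on $\gamma$ gives $w_1 w_3\in\wnset_\Delta$, hence $w_1 w_2 w_3\in\wnset_\Delta$, so the constructed \htos is in fact an \htoh.

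For $\htoh\subseteq\wnvpt\circ\fcns$, I start from the \htos-to-\vpt construction of Lemma~\ref{lem:fcns-htostrvsvpt}, whose states encode the simulated rule together with its position (beginning, middle, end) and whose stack records the enclosing rule. This construction computes the right transduction, but naively assigning $w_1$ to $c_f$ and $w_3$ to $r_f$ produces the concatenation $w_1 w_3$, which need not lie in $\wnset_\Delta$. The fix is to redistribute outputs across the three matching call/return pairs available in the $\fcns$ encoding of $f(h_1)\cdot h_2$: the outer $c_f/r_f$, the first-call/last-return of $\fcns(h_1)$, and the first-call/last-return of $\fcns(h_2)$, possibly the dummies $\bot_c/\bot_r$ when a child is empty. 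For each rule, I decompose the well-nested word $w_1 w_2 w_3$ into well-nested factors plus matched pairs of unmatched calls and returns, route the well-nested factors to the outer call/return transitions, and delegate each pair of unmatched fragments to an inner first-call/last-return by pre- or appending it to the output of the rule fired at that inner call (respectively return); the delegation is carried through finitely many extra state and stack components recording the pending fragments. Each matching pair of the \vpt then receives a concatenation either lying directly in $\wnset_\Delta$ or of the form $\tau\cdot(\text{well-nested})\cdot\pi$ with $\tau\pi\in\wnset_\Delta$, hence itself well-nested, so the resulting \vpt is well-nested.

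The main obstacle is exactly this converse direction: Lemma~\ref{lem:fcns-htostrvsvpt} gives a \vpt with the correct behavior but no a priori \wnvpt guarantee, so its output placement has to be reworked. The $\htoh$ hypothesis $w_1 w_2 w_3\in\wnset_\Delta$ is essential here, since it controls the decomposition of each rule's outputs into well-nested cores and matched fragment pairs, and it is used inductively to show that the delegated fragments combine cleanly with the inner rules' outputs so that the \wnvpt condition holds for every stack symbol, including the auxiliary ones introduced to handle $\bot_c/\bot_r$.
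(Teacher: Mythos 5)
Your first direction coincides with the paper's: both take the pair-of-states \htos built in the proof of Lemma~\ref{lem:fcns-htostrvsvpt} and observe that the \wnvpt condition on each stack symbol forces $w_1w_3\in\wnset_\Delta$ in every produced rule (the paper additionally normalizes away the non-empty leaf outputs coming from $\bot_c\bot_r$, which only inserts a well-nested word into the middle slot and so stays inside \htoh).

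For the converse you genuinely depart from the paper, and for good reason. The appendix reuses the Lemma~\ref{lem:fcns-htostrvsvpt} construction verbatim and never verifies well-nestedness of the resulting \vpt; as you observe, that construction is \emph{not} well-nested in general: for a rule with $w_1=c$, $w_2=r$, $w_3=\epsilon$ (so $w_1w_2w_3\in\wnset_\Delta$), the outer call/return pair emits $w_1$ and $w_3$, whose concatenation $c$ is unbalanced, while $w_2$ is emitted on a return one stack level deeper, breaking that pair too. Your redistribution --- factor $w_1=w_1'w_1''$, $w_2=w_2'w_2''$, $w_3=w_3'w_3''$ so that $w_1'w_3''$, $w_1''w_2'$ and $w_2''w_3'$ are each well-nested, keep $w_1'/w_3''$ on the outer pair, and delegate $w_1''/w_2'$ (resp.\ $w_2''/w_3'$) to the first-call/last-return pair of $\fcns(h_1)$ (resp.\ $\fcns(h_2)$), falling back on $\bot_c/\bot_r$ for empty children --- does repair this, and your invariant that each pair receives $\tau\cdot(\text{well-nested})\cdot\pi$ with $\tau\pi\in\wnset_\Delta$ is exactly what closes the induction. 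One detail to make explicit: the stack symbol must determine not only the pending fragments $\tau,\pi$ but also the rule chosen at the current node (the paper's stack records only the enclosing rule), since otherwise a call firing rule $t$ and a return firing a different rule $s$ could share a stack symbol, and the prefix $w_1'$ taken from $t$ concatenated with the suffix $w_3''$ taken from $s$ need not be balanced. With that adjustment your argument is complete, and it supplies the half of the lemma that the paper's appendix leaves essentially unproved; the price is a heavier construction than the paper's, the gain is an actual proof of the \wnvpt property.
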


\subsection{Comparison with other tree transducer models} 

\htos correspond to parameter-less macro forest transducers \cite{PerSei04} without swapping nor copying, that output strings. Therefore by Lemma
\ref{lem:htostrvsvpt}, \vpts are strictly less expressive than mfts. Macro tree transducers (mtts) are transducers on ranked trees \cite{Engelfriet03}. 
To compare them with \vpts, which run on (linearization of) hedges, we use the first-child next-sibling encoding. 
As shown in \cite{PerSei04}, any mft is equivalent to the composition of two mtts on those encodings. 
Linear-size increase transformations (or transducers) are those transformations such that 
the size of an output is linearly bounded by the size of the input. 
In \cite{DBLP:conf/fsttcs/Maneth03} it is shown that any linear-size increase transformation
defined by an arbitrary composition of mtts is definable by a single linear-size increase mtt. 
Therefore, linear-size increase mfts are equivalent to linear-size increase mtts. Since \vpts 
clearly define linear-size increase transformations, they are also strictly included in mtts.

Top-down ranked tree transducers with the linear and non-swapping restrictions are equivalent to \htob transducers on first-child next-sibling encodings. By Lemma \ref{lem:htobtrvshth},
we get that they are strictly less expressive than \wnvpts, and therefore \vpts. The arguments on the size of the ouputs in the proof of Lemma \ref{lem:htobtrvshth} still applies when dropping 
that restriction (the yield transduction cannot be defined), and therefore top-down ranked tree transducers are incomparable with \vpts. For the same reasons, bottom-up tree transducers 
are also incomparable with \vpts.

Finally, let us mention the \emph{uniform tree transducers} introduced by Neven and Martens \cite{conf/icdt/MartensN03}, and inspired by the XSLT language. These transducers can duplicate subtrees, but must use the same state
to transform all the children of a node. For those reasons they are incomparable with \vpts \cite{Servais11}.

\appendix

\label{sec:appendix}

\section{Appendix: Proof of Lemma 3 and 4}

\begin{proof}
 Let $A=(Q,I,F,\Gamma,\delta)\in \vpt(\Sigma,\Delta)$. We
  define  $T=(Q',I',\delta')\in \htostr(\Sigma_c\mytimes\Sigma_r,\Delta)$ as follows:
\begin{itemize}
\item $Q'=\{(q_1,q_2)\in Q^2 \mid \exists w \in \wnset_{\Sigma} \textup{ s.t. }(q_1,\bot) \xrightarrow{w}(q_2,\bot)\}$
\item $I'=Q' \cap (I\times F)$
\item for all $c \in \Sigma_c$, $r \in \Sigma_r$, and all states
  $p_1,p_2,q_1,q_2,q'_1$ such that $(q_1,q_2),(p_1,p_2),(q'_1,q_2)\in
  Q'$, if there exist transitions $(q_1,c,\gamma,w_1, p_1) \in
  \delta_c$, $(p_2,r,\gamma,w_2,q'_1) \in \delta_r$, we build the rule:
$$
(q_1,q_2)((c,r)(x_1)\cdot x_2) \rightarrow w_1 \cdot (p_1,p_2)(x_1) \cdot w_2 \cdot (q'_1,q_2)(x_2)
$$
In addition, we also have:
$$(q_1,q_2)(0) \rightarrow \epsilon \in \delta' \iff q_1=q_2$$
\end{itemize}
It can be shown by induction that for all well-nested words $w\in \wnset_\Sigma$, $T$ has a
computation over $\hedge(w)$ leading to $(q_1,q_2)$ producing $w'$ iff 
$A$ admits a run from $(q_1,\bot)$ to $(q_2,\bot)$ over $w$ producing $w'$. 
Observe also that by definition $T$ is tail-recursive.

Notice that if $A$ is a \wnvpt, then we 
have $w_1w_2\in\wnset_\Sigma$, and thus $T\in \htoh$.
This proves one direction of Lemma~\ref{lem:htohtrvswnvpt}.

\medskip

Conversely, let us consider the transducer $T=(Q,I,\delta)$ from $\htostr(\Sigma_c\mytimes\Sigma_r,\Delta)$. We
define $A=(Q',I',F',\Gamma',\delta')\in \vpt(\Sigma,\Delta)$ as follows: $Q'=Q$, $I'=I$, $F' = \{q\in Q \mid q(0) \rightarrow \epsilon \in \delta \}$, $\Gamma' = \delta$ and
for every rule $ t = q((c,r)(x_1)\cdot x_2)\rightarrow w_1 q_1(x_1) w_2 q_2(x_2)\in \delta$, we add the following rules to $\delta'$:
$$
(q,c,t,w_1,q_1) \qquad \{(q',r,t,w_2,q_2) \mid q'\in F'\}
$$

It can be shown by induction that
for all well-nested word $w\in \wnset_\Sigma$, $B$ has a
computation over $\hedge(w)$ leading to $q$ producing $w'$ iff 
$A$ admits a run from $(q,\bot)$ to $(q',\bot)$ over $w$ producing $w'$, for some $q'\in F'$. 

Notice that that if $B\in \htoh$, then we 
have $w_1w_2\in\wnset_\Sigma$, and thus $A$ is a well-nested \vpt. This proves the other direction of Lemma~\ref{lem:htohtrvswnvpt}.
\end{proof}


\section{Appendix: Proof of Lemma 5 and 6}

\begin{proof}
 Let $A=(Q,I,F,\Gamma,\delta)\in \vpt(\Sigma_\bot,\Delta)$. We first define the two following sets:

 $$
 \begin{array}{ll}
 X           {=}  \{(p,q)\in Q^2 | & \!\!\!\!\!\!\text{there exists a run }(p,\bot) \xrightarrow{cwr} (q,\bot) \text{ in }A, \text{ with }c\in \Sigma_c, r\in \Sigma_r, w\in\wnset_{\Sigma_\bot}\}\\
 X_{\bot} {=} \{(p,q)\in Q^2 | & \!\!\!\!\!\text{there exists a run
 }(p,\bot) \xrightarrow{\bot_c\bot_r} (q,\bot) \text{ in }A\}
 \end{array}
 $$
  We define  $B=(Q',I',\delta')\in \htos(\Sigma_c\mytimes\Sigma_r,\Delta)$ as follows:
$Q'=X\cup X_\bot$, $I'=X \cap (I\times F)$, for all $(p,q)\in X_\bot$, and all transitions $(p,\bot_c,\gamma, w, p')$, $(p', \bot_r, \gamma, w', q)$, we add the following rule to $\delta'$:
$(p,q)(0)\rightarrow ww'$.

In addition, for every $c \in \Sigma_c$, $r \in \Sigma_r$, and for every states
  $p,q,p_1,p_2,p_3$ such that $(p,q)\in X$, and $(p_1,p_2),(p_2,p_3)\in
  Q'$, if there exist a transition $(p,c,\gamma,w_1, p_1) \in
  \delta_c$ and a transition $(p_3,r,\gamma,w_3,q) \in \delta_r$, we build the rule:
$$
(p,q)((c,r)(x_1)\cdot x_2) \rightarrow w_1 \cdot (p_1,p_2)(x_1) \cdot (p_2,p_3)(x_2) \cdot w_3
$$

It can be shown by induction that for all well-nested word $w\in \wnset_\Sigma$, $B$ has a
computation over $\hedge(w)$ leading to $(q_1,q_2)$ producing $w'$ iff 
$A$ admits a run from $(q_1,\bot)$ to $(q_2,\bot)$ over $\fcns(w)$ producing $w'$. 

Observe also that $B$ does not comply with the definition of \htos as the first set of rules 
may produce non-empty words. However, it is easy to transform $B$ to ensure
this property as follows: for every rule $(p,q)(0)\rightarrow x$, build a state $(p,x,q)$,
and add the rule $(p,x,q)(0)\rightarrow \epsilon$. Then, modify the second set of rules
by replacing $(p,q)$ by $(p,x,q)$, and introducing $x$ at the convenient position in
the output of the rule. Note that this transformation will result in non-empty "$w_2$" words.

In addition, we assumed that input words are non-empty. As a consequence, the \fcns encodings
considered as input are different from the word $\bot_c\bot_r$. This justifies that the initial
states can be taken in $X$ only. This also implies that the removing of non-empty leaf rules
described before is correct, as every leaf rule will be applied in the context of some rule associated 
with an internal node.

Last, for the proof of Lemma~\ref{lem:fcns-htohtrvswnvpt}, it is easy to verify that if $A$ is a \wnvpt, 
then $B\in \htoh$.

\medskip

Conversely, let us consider the transducer $B=(Q,I,\delta)$ in
$\htos(\Sigma_c\mytimes\Sigma_r, \Delta)$. We
define $A=(Q',I',F',\Gamma',\delta')$ in $\vpt(\Sigma_\bot,\Delta)$ as
follows:

 $Q'=\{(q,i)\mid q\in I, i\in \{0,1\}\}\cup \{(t,i)\mid t \in\delta, i\in\{0,1,2\}\}\cup \{q_\bot\}$, $I'=I\times\{0\}$, $F' = I\times\{1\}$, 
$\Gamma' = Q'$, and  for every rule $ t = q((c,r)(x_1)\cdot x_2)\rightarrow w_1 q_1(x_1) w_2 q_2(x_2) w_3\in \delta$ such that
$q\in I$, we add the two following rules to $\delta'$:
$$
((q,0),c,(q,0),w_1,(t,0)) \qquad ((t,2),r,(q,0),w_3,(q,1))
$$

\medskip

In addition, for every two rules 
$$
\begin{array}{llll}
 t & = & q((c,r)(x_1)\cdot x_2)\rightarrow w_1 q_1(x_1) w_2 q_2(x_2) w_3\in \delta \\
t' & = & q'((c',r')(x_1)\cdot x_2)\rightarrow w'_1 q'_1(x_1) w'_2 q'_2(x_2) w'_3\in \delta 
\end{array}
$$
and $i\in\{0,1\}$ such that $q'_i=q$,  we add the two following rules to $\delta'$:
$$
((t',i),c,(t',i),w_1,(t,0))$$
$$ ((t,2),r,(t',i),w_3x,(t',i+1)) \text{ where }
x=\left\{
\begin{array}{ll}
w'_2 & \text{if }i=0\\
\epsilon & \text{otherwise}
\end{array}
\right.
$$

\medskip

Last, we consider rules associated with leafs: for every rule $q(0)\rightarrow \varepsilon$,
we add the two following transitions: (provided that the $i$-th state of the rule $t$ is $q$)
$$
((t,i),\bot_c,(t,i),\varepsilon,q_\bot)
\qquad
(q_\bot,\bot_r,(t,i),\varepsilon,(t,i+1))
$$

It can be shown by induction that 
for all well-nested word $w\in \wnset_\Sigma$, $B$ has a
computation over $\hedge(w)$ leading to $q$ producing $w'$ iff the two following 
properties are verified:
\begin{itemize}
\item if $q\in I$, then $A$ admits a run from $(q,0)$ to $(q,1)$ over $\fcns(w)$ producing $w'$
\item for every $(t,i)\in \delta \times\{0,1,2\}$ such that $t= p((c,r)(x_1)\cdot x_2)\rightarrow w_1 q_1(x_1) w_2 q_2(x_2) w_3\in \delta$ and $q_i=q$, 
$A$ admits a run from $(t,i)$ to $(t,i+1)$ over $\fcns(w)$ producing $w'x$, 
where $x=\epsilon$ if $i=1$, and $x=w_2$ otherwise.
\end{itemize}
\end{proof}

\vspace{2mm}
\textbf{Acknowledgments} We are very grateful to Sebastian Maneth and anonymous referees for helpful comments on this paper. 

\bibliographystyle{eptcs}
\bibliography{biblio}
\end{document}